\newcommand{\R}{\mathbb{R}}
\newcommand{\N}{\mathbb{N}}
\newcommand{\cH}{\mathcal{H}}
\newcommand{\abs}[1]{\left\vert #1 \right\vert}
\newcommand{\norm}[1]{\left\Vert #1 \right\Vert}
\newcommand{\sca}[2]{\langle #1 , #2 \rangle}
\newcommand{\OneTo}[2]{#1\in[#2]}
\newcommand{\wmax}{w_{max}^B}
\newcommand{\wmin}{w_{min}^B}
\newcommand{\cif}{& \text{if }}
\newcommand{\SAT}{\textsc{Max 2-Sat}\xspace}
\newcommand{\SATF}{{\normalfont SAT/Flip}\xspace}
\newcommand{\KM}{\textsc{Discrete $K$-Means}\xspace}
\newcommand{\KMS}{{\normalfont DKM/Swap}\xspace}
\newcommand{\UFL}{\textsc{Uncapacitated Facility Location}\xspace}
\newcommand{\MUFL}{\textsc{Metric Uncapacitated Facility Location}\xspace}
\newcommand{\MUFLS}{{\normalfont MUFL/Swap}\xspace}
\newcommand{\pkm}{\phi_{KM}}
\newcommand{\pfl}{\phi_{FL}}
\newtheorem{definition}{Definition}
\newtheorem{theorem}[definition]{Theorem}
\newtheorem{lemma}[definition]{Lemma}
\newtheorem{corollary}[definition]{Corollary}
\newtheorem{proposition}[definition]{Proposition}
\newcommand*{\defeq}{\mathrel{\rlap{%
                     \raisebox{0.3ex}{$\m@th\cdot$}}%
                     \raisebox{-0.3ex}{$\m@th\cdot$}}%
                     =}
\begin{document}

\thispagestyle{empty}

\begin{center}
	{\LARGE \textbf{A Technical Report on PLS-Completeness of Single-Swap for Unweighted Metric Facility Location and $K$-Means}}
	\bigskip
	
	{\Large Sascha Brauer}
	
	\texttt{sascha.brauer{@}uni-paderborn.de}
	\bigskip
	
	Department of Computer Science
	
	Paderborn University
	
	33098 Paderborn, Germany
\end{center}  
\begin{abstract}
	Recently, \cite{brauer17} showed that the single-swap heuristic for weighted metric uncapacitated facility location and $K$-Means is tightly PLS-complete.
	We build upon this work and present a stronger reduction, which proves tight PLS-completeness for the unweighted version of both problems.
\end{abstract}

\section{Introduction}

Metric facility location and $K$-Means are important problems in operations research and computational geometry.
Both problems admit a fairly simple local search scheme called \emph{single-swap}, which is known to compute a constant factor approximation of an optimal solution.
Recently, \cite{brauer17} showed that single-swap is \emph{tightly PLS-complete}, which means that the local search algorithm requires exponentially many steps in the worst case and that given some initial solution it is PSPACE-complete to find the solution computed by the algorithm started on this initial solution.
One shortcoming of the presented result is that it constructs a non-trivial weight function on the clients, hence, the reduction is not sufficient to classify the hardness of the problems on unweighted instances.

In this report, we modify the reduction presented in \cite{brauer17} to obtain tight PLS-completeness of unweighted versions of metric uncapacitated facility location and discrete $K$-Means.
This is an important extension of the result, since, besides being a formally stronger result, unweighted instances are significantly more relevant in practice.
Furthermore, we present a lower bound on the number of dimensions required to embed the point set constructed by our reduction into Euclidean space.
\section{Preliminaries and Notation}

In an \UFL (UFL) problem we are given a set of clients $C$, a set of facilities $F$, an opening cost function $f:F\rightarrow\R$, and a distance function $d: C\times F\rightarrow \R$.
The goal is to find a subset of facilities $O\subset F$ minimizing

\[ \pfl(C,F,O) = \sum_{c\in C} \min_{o\in O}\{d(c,o)\} + \sum_{o\in O} f(o) \ . \]

\MUFL is a special case of this problem, where we require the distance function $d$ to be a metric on $C\cup F$.
The PLS problem \MUFLS consists of MUFL, where the so-called \emph{single-swap} neighbourhood of a set of open facilities is given by sets of facilities obtained by newly opening a closed facility, closing an open facility, or doing both in one step (\emph{swapping} two facilities).

In \KM (DKM) we do not differentiate between clients and facilities, but are given a single set of points $C\subset\R^D$.
We measure distance between points $p,q\in C$ as $d(p,q) = \norm{p-q}^2$.
Furthermore, instead of imposing an opening cost, we allow at most $K$ locations to be opened.
Hence, the goal is to find $O\subset C$ with $\abs{O} = K$ minimizing

\[ \pkm(C,O) = \sum_{c\in C} \min_{o\in O}\{\norm{c - o}^2 \} \ . \]

The PLS problem \KMS consists of DKM, where the single-swap neighbourhood is given by all subsets of points of $C$ obtained by swapping two points.
We forbid the open and close operations, as these leave the space of feasible solutions for DKM.

\SAT (SAT) is a variant of the satisfiability problem, where each clause consists of exactly $2$ literals and has some positive integer weight assigned to it.
The cost of a truth assignment is the sum of the weights of all satisfied clauses.
The PLS problem \SATF consists of SAT, where the neighbourhood of an assignment is given by all assignments obtained by changing the truth value of a single variable.

\begin{theorem}[\cite{schaeffer91}]
	\SATF is tightly PLS-complete.
\end{theorem}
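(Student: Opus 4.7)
The plan is to establish PLS-membership and PLS-hardness separately, and then upgrade the hardness reduction to a tight one. Membership in PLS is essentially immediate: an assignment has an efficiently computable cost, the single-flip neighbourhood of an assignment on $n$ variables has size $n$, and for each neighbour we can decide in polynomial time whether it strictly improves the cost. So the nontrivial content is the hardness and the tightness.

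For PLS-hardness I would reduce from \textsc{Circuit/Flip}, the canonical PLS-complete problem (Johnson, Papadimitriou, Yannakakis), whose instances are Boolean circuits whose ``cost'' at an input is the integer encoded by the output wires, and whose neighbourhood consists of single-bit flips of the input. Given a circuit with inputs $x_1,\dots,x_n$ and internal wires $g_1,\dots,g_m$, I would apply a Tseitin-style transformation: introduce a variable for every wire, and, for every gate, produce a small set of $2$-clauses that express the gate semantics. These ``structural'' clauses are weighted by a value $M$ chosen large enough that in any local optimum no structural clause is violated, so the non-input variables $g_i$ are forced to be the correct evaluation of the circuit on the current input. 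The output wires are then weighted with the powers of two $1,2,4,\dots$ corresponding to the bit they represent, so that the cost of a consistent assignment equals the cost of the underlying input in \textsc{Circuit/Flip}. This gives a PLS-reduction: local optima of the \SAT instance correspond to local optima of the circuit.

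The main obstacle is tightness. Tight PLS-reductions must preserve the exponential length of improvement sequences, which means one cannot afford to let the \SATF dynamics wander arbitrarily while the gate variables are being ``repaired''. The standard remedy, which I would follow, is to arrange the structural clauses so that whenever some $g_i$ disagrees with its defining gate applied to the current values of its predecessors, flipping $g_i$ is strictly improving and this flip does not destroy the consistency of any other wire that was already consistent. Concretely, one sets things up so that each gate's clauses penalise exactly the ``wrong output'' configurations; since the penalty is $M$ and the remaining clauses contribute at most $\mathrm{poly}(n)$, such a flip is always improving. Consequently, starting from any assignment one can first propagate values through the circuit in topological order via a polynomial-length, monotone sequence of improving flips, and from then on the trajectory stays on ``consistent'' assignments where an input flip in \SATF is mirrored by a short burst of gate-repair flips followed by a step that corresponds to one flip in \textsc{Circuit/Flip}.

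The hardest technical point, and the one I expect to spend most of the effort on, is choosing the structural gadgets so that the bijection between consistent \SAT assignments and circuit inputs extends to a mapping of improvement paths that is length-preserving up to a polynomial factor, and so that no improving flip on a consistent assignment ever leaves the consistent set except by flipping an input. Once this is in place, the exponential lower bound for \textsc{Circuit/Flip} transfers to \SATF, and the PSPACE-hardness of the standard-algorithm problem follows from the definition of a tight reduction.
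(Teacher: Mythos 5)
First, a remark on scope: the paper does not prove this statement at all --- it is imported verbatim from \cite{schaeffer91} and used as the starting point of the reductions. So there is no in-paper argument to compare against, and your proposal has to be judged on its own merits as a reconstruction of the Sch\"affer--Yannakakis result.

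Judged that way, there is a concrete gap that breaks the construction before tightness even becomes an issue: a Tseitin-style encoding of a two-input gate cannot be written with clauses of width $2$. Expressing $g \leftrightarrow (x \wedge y)$ requires the clause $(\bar x \vee \bar y \vee g)$, and similarly for OR; so your ``structural clauses'' are $3$-clauses, and the instance you build is not a \SAT instance. This is not a cosmetic problem: it is exactly why the standard proof does not reduce \textsc{Circuit/Flip} to \SATF directly, but goes through a chain (roughly \textsc{Circuit/Flip} to positive-literal NAE \textsc{Max 3-Sat}/Flip, then to \textsc{Max-Cut}/Flip, then to \SATF, the last step replacing each edge $\{u,v\}$ of weight $w$ by the clause pair $(u \vee v)$ and $(\bar u \vee \bar v)$ of weight $w$ each, so that the cost is an affine function of the cut weight). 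Your plan has no step that reduces the clause width.

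The second gap is the one you yourself flag as ``the hardest technical point'' but then only assert: with all structural clauses carrying the same large weight $M$, flipping a gate variable $g_i$ into agreement with its predecessors can violate the structural clauses of every gate that reads $g_i$, so that flip is not improving and inconsistent assignments can be locally optimal. Graded weights along the topological order do not rescue tightness either, because after an input flip the entire repair cascade must consist of strictly improving single flips while the output-bit clauses are simultaneously pulling in an arbitrary direction. Handling this is precisely what forces the heavy machinery in \cite{schaeffer91} (duplicated circuitry computing a better neighbour, comparators, and a controlled reset phase); without some version of that apparatus, the claimed correspondence between improvement paths is not established, and neither PLS-hardness nor tightness follows from what you have written.
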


For each clause set $B$ and truth assignment $T$ we denote the SAT cost of $T$ with respect to $B$ by $w(B,T)$.
For a literal $x$ we denote the set of all clauses in $B$ containing $x$ by $B(x)$.
Further, we denote the set of all clauses in $B$ satisfied by $T$ by $B_t(T)$ and let $B_f(T) = B\setminus B_t(T)$.
Finally, we set $\wmax = \max_{b\in B}\{ w(b) \}$ and $\wmin = \min_{b\in B}\{ w(b)\}$.
\section{Facility Location}

We show that unweighted \MUFLS is tightly PLS-complete.

\begin{proposition}\label{prop:mufl}
	\SATF $\leq_{PLS}$ \MUFLS and this reduction is tight.
\end{proposition}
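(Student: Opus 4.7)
The plan is to adapt the weighted reduction of \cite{brauer17} by replacing every weighted client by a cluster of unit-weight clients with matching total mass. I would first recall the structure of that reduction: for every variable $x$ of the SAT instance a pair of variable-facilities is introduced, and for every clause $b \in B$ a client is placed at a location chosen so that swapping one variable-facility of $x$ for the other changes $\pfl$ by the same amount (up to an additive constant independent of the truth assignment) by which the SAT cost $w(B,T)$ changes when $x$ is flipped. The client of clause $b$ carries weight $w(b)$ precisely to mirror the weighted SAT cost.

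To remove the weights, for every clause $b$ I would replace its single weighted client by $w(b)$ unit-weight clients placed on the vertices of a tiny $\varepsilon$-clique around the original client position. Since the tight \SATF-reductions of \cite{schaeffer91} produce instances with polynomially bounded integer weights, the resulting \MUFLS instance is only polynomially larger. The pullback from MUFL solutions to truth assignments is the one already used in \cite{brauer17}: it simply reads off the truth value of $x$ from which of its two variable-facilities is open.

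The verification breaks into three parts. First, the enlarged distance function must remain a metric on $C \cup F$; this is achieved by inheriting all between-cluster and client-to-facility distances from the original client position, perturbed by at most $O(\varepsilon)$, and checking the triangle inequality. Second, one must show that in every relevant MUFL solution all $w(b)$ copies of clause $b$ share the same nearest open facility, so their total contribution to $\pfl$ equals $w(b)$ times that of the original weighted client, up to an additive error of $O(\varepsilon \cdot w(b))$. Choosing $\varepsilon$ small enough relative to $\wmin$ and the total number of copies ensures this error is dominated by the smallest cost gap arising in the weighted reduction. Third, these two facts together guarantee that the canonical bijection $T \mapsto O(T)$ transports single-swap improving moves to variable flips and vice versa.

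The main obstacle is the tightness argument. Tight PLS-completeness requires a correspondence between the full improvement graphs, not just between local optima, so that the PSPACE-hardness of the standard algorithm transfers to \MUFLS. The danger introduced by clustering is that a single swap might become improving in the unweighted instance only because it redirects some copies of a cluster to a newly opened facility while leaving the others with their previous one, an option unavailable in the weighted instance. I would therefore have to show, from the $\varepsilon$-clique structure together with the fact that the remaining metric operates at a coarser scale, that every copy of a cluster always shares a common nearest open facility on any improvement path starting from the canonical image of a truth assignment. Once this \emph{cluster coherence} is in place, the transition graph of the unweighted instance simulates that of the weighted one step for step, and tightness is inherited from \cite{brauer17}.
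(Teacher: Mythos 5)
Your construction has a fatal size problem. The cornerstone of your plan is the claim that the tight \SATF instances of \cite{schaeffer91} have polynomially bounded integer weights, which lets you replace the clause client of weight $w(b)$ by $w(b)$ unit clients. But \SATF with polynomially bounded weights cannot be tightly PLS-complete (unless the completeness collapses): if all weights are polynomial in the input size, the objective $w(B,T)$ is polynomially bounded, every improving flip increases it by at least $1$, and the standard local search algorithm terminates after polynomially many steps --- contradicting both the exponential-path and PSPACE-hardness properties that tightness is supposed to transfer. The hard instances necessarily carry weights that are exponential in the instance size (they are given in binary), so your cluster of $w(b)$ copies is exponentially large and $\Phi$ is not a polynomial-time reduction. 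This is not a repairable detail of the $\varepsilon$-clique analysis; it sinks the approach at the outset.

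The paper sidesteps weights-as-multiplicities entirely by encoding $w(b_m)$ into the \emph{distances} rather than into client multiplicity: with $W = M\cdot\wmax$, a clause client $b_m$ sits at distance $1+w(b_m)/W$ from the literals it contains and $1+c\cdot w(b_m)/W$ from their negations ($1<c<2$), all other distances being $1$ or $2$. These perturbations are rationals of polynomial bit length no matter how large the weights are, the function is easily checked to be a metric, and Lemma~\ref{lem:uflcost} shows the cost of any ``reasonable'' solution is an affine increasing function of $\sum_{b_m\in B_f(T_O)} w(b_m)$, so improving swaps correspond exactly to improving flips. Your instinct about where the real difficulty lies (controlling the full improvement graph, not just local optima) is sound --- the paper handles the analogue of your ``cluster coherence'' worry via Lemma~\ref{lem:fllocalopt}, showing every local optimum opens exactly one of $\{x_n,\bar x_n\}$ per variable --- but you need a weight-to-distance encoding, not a weight-to-multiplicity one, before any of that analysis can begin.
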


In the following, we present our PLS-reduction $(\Phi,\Psi)$ and prove its correctness.
We omit the tightness proof, as is can easily by obtained by incorporating the arguments presented here into the tightness proof from \cite{brauer17}.

\subsection{Construction of $\Phi$ and $\Psi$}

First, we construct the function $\Phi$ mapping a \SAT instance to a \MUFL instance.
Let $(B,w)$ be a \SAT instance over the variables $\{x_n\}_{\OneTo{n}{N}}$.
Let further $M\defeq \abs{B}$ and $W \defeq M\cdot\wmax$.
We assume that $M \geq 2$.
In the following, we construct an instance $(C,F,f,d)\in$ \MUFL.
We set $F = \{x_n,\bar x_n\}_{\OneTo{n}{N}}$ and locate a client at each facility and a client corresponding to each clause, so $C = F\cup B$.
We set $d:C\cup F \times C\cup F \rightarrow \R$ to

\[ d(p,q) = d(q,p) = 
	\begin{cases}
		0 \cif p = q \\
		1 \cif p = x_n \wedge q = \bar x_n \\
		1+\frac{w(b_m)}{W} \cif (p = x_n \vee p = \bar x_n) \wedge q = b_m \wedge p\in b_m \\
		1+\frac{c\cdot w(b_m)}{W} \cif (p = x_n \vee p = \bar x_n) \wedge q = b_m \wedge \bar p\in b_m \\
		2 & \text{else,}
	\end{cases}
\]
where $1 < c < 2$.

It is easy to see that $d$ is a metric and that the points closest to each other are literals and their negation, that clauses are closer to literals the contain, then to the literal's negation, and that all other point pairs have the same, even larger, distance to each other.

The opening cost function is constant $f \equiv 2$.

Second, we construct the function $\Psi$ mapping solutions of $\Phi(B,w)$ back to solutions of $(B,w)$.
Given a set $O\subset F$ we let each variable $x_n$ be true if $x_n\in F$ and let it be false otherwise.

In the following, we denote $\Phi(B,w) = (C,F,2,d)$ and $\Psi(B,w,O) = T_O$.

\subsection{$(\Phi,\Psi)$ is a PLS-reduction}

We need to argue that $T_O$ is locally optimal for $(B,w)$ if $O$ is locally optimal for $\Phi(B,w)$.
To prove this, we define a particular subset of solutions for $\Phi(B,w)$ we call \emph{reasonable} solutions.

\begin{definition}
	Let $O\subset F$.
	We call $O$ \emph{reasonable} if $\abs{O} = N$ and
	\[ \forall \OneTo{n}{N}: x_n\in O \vee \bar x_n\in O \ . \]
\end{definition}

To prove correctness of our reduction, we observe the following crucial properties of reasonable solutions.
The restriction of $\Psi$ to reasonable solutions is a bijection, there is close relation of the MUFL cost of a reasonable solution and the cost of its image under $\Psi$, and all locally optimal solutions of $(C,F,2,d)$ are reasonable.

\begin{lemma}\label{lem:uflcost}
	If $O\subset F$ is reasonable, then
	\[ \pfl(C,F,O) = 3N + M + \frac{1}{W} \sum_{b_m\in B} w(b_m) + \frac{c-1}{W}\sum_{b_m\in B_f(T_O)} w(b_m) \ . \]
\end{lemma}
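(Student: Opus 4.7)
The plan is to split $\pfl(C,F,O)$ into its three natural pieces and evaluate each separately, exploiting reasonableness to pin down which facility attains the minimum for each client.

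First, I would dispose of the opening cost. Since $f\equiv 2$ and $\abs{O}=N$, the second sum in $\pfl$ equals $2N$. Next, I would split the client sum as $\sum_{p\in F}\min_{o\in O}d(p,o)+\sum_{b\in B}\min_{o\in O}d(b,o)$, since $C = F\cup B$.

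For the facility clients, reasonableness together with $\abs{O}=N$ forces exactly one of $x_n,\bar x_n$ to lie in $O$ for each $n$ (the pairs are disjoint and there are $N$ of them). For the literal in $O$ the contribution is $0$, for its negation the nearest open facility is the literal itself at distance $1$; every other facility lies at distance at least $1$, so this is indeed the minimum. Summing over the $N$ pairs gives $N$.

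For the clause clients $b_m$, the key observation is that $1+\frac{c\cdot w(b_m)}{W}<2$, because $c<2$ and $w(b_m)\leq\wmax\leq W$, so both distances to literals appearing in $b_m$ (or their negations) strictly beat the default distance of $2$. Hence the minimum is attained at some literal of $b_m$ or its negation. Now I would case split using $T_O$: if $b_m\in B_t(T_O)$, some literal of $b_m$ is in $O$ by definition of $\Psi$, giving minimum distance $1+\frac{w(b_m)}{W}$; if $b_m\in B_f(T_O)$, then neither literal of $b_m$ lies in $O$, so by reasonableness both of their negations do, and the minimum distance is $1+\frac{c\cdot w(b_m)}{W}$. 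Summing,
\[
\sum_{b_m\in B}\min_{o\in O}d(b_m,o)
= M + \frac{1}{W}\!\!\sum_{b_m\in B_t(T_O)}\!\! w(b_m) + \frac{c}{W}\!\!\sum_{b_m\in B_f(T_O)}\!\! w(b_m).
\]
Combining $2N+N+M+\cdots$ and using $\sum_{B_t(T_O)}w + \sum_{B_f(T_O)}w = \sum_B w$ to rewrite the first weighted sum as $\frac{1}{W}\sum_B w(b_m) - \frac{1}{W}\sum_{B_f(T_O)}w(b_m)$ yields exactly the claimed formula.

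The only nontrivial step is the distance inequality ensuring that the closest open facility to a clause client is always a literal of that clause or its negation; everything else is bookkeeping. I would state this as a short observation before the clause computation and otherwise just execute the sums.
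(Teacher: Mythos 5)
Your proposal is correct and follows essentially the same route as the paper's proof: decompose the cost into opening cost $2N$, literal-client service cost $N$, and clause-client service cost, then case-split on $B_t(T_O)$ versus $B_f(T_O)$. The only difference is that you spell out the inequality $1+\tfrac{c\cdot w(b_m)}{W}<2$ justifying that the nearest open facility to a clause client is always a literal of that clause or its negation, which the paper leaves as "easy to see."
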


\begin{proof}
	Observe, that reasonable solutions incur opening cost of $2N$.
	By definition of $\Psi$ and reasonable solutions, we have that a variable $x_n$ is assigned \emph{true} if and only if $x_n\in O$ and is assigned \emph{false} if and only if $\bar x_n \in O$.
	We obtain that the service cost of clients corresponding to literals is $N$, since each facility corresponding to a literal either is in $O$, then the corresponding client has cost $0$, or its negated facility at distance $1$ is in $O$ and it has cost $1$.
	It is easy to see, by definition of the point set and $\Psi$, that a client corresponding to a clause $b_m\in B_t(T_O)$ has at least one facility at distance $1+ w(b_m)/W$ and that a client corresponding to a clause $b_m\in B_f(T_O)$ has two facilities at distance $1+c\cdot w(b_m)/W$ and the rest at distance $2$.
	We obtain
	\begin{align*}
		\pfl(C,F,O) &= 2N + N + \sum_{b_m\in B_t(T_O)} \left( 1 + \frac{w(b_m)}{W}\right) + \sum_{b_m\in B_f(T_O)} \left( 1 + \frac{c\cdot w(b_m)}{W}\right) \\
		&= 3N + M + \sum_{b_m\in B_t(T_O)} \frac{w(b_m)}{W} + \sum_{b_m\in B_f(T_O)} \frac{c\cdot w(b_m)}{W} \\
		&= 3N + M + \frac{1}{W}\sum_{b_m\in B} w(b_m) + \frac{c-1}{W}\sum_{b_m\in B_f(T_O)} w(b_m)
	\end{align*}

\end{proof}

\begin{corollary}\label{cor:flcost}
	If $O,O'\subset C$ are reasonable solutions for $\Phi(B,w)$, then
	\[ w(B,T_O) < w(B, T_{O'}) \Leftrightarrow \pfl(C,F,O) > \pfl(C,F,O') \ . \]
\end{corollary}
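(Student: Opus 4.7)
The plan is to reduce the biconditional to a direct algebraic comparison using the formula from Lemma \ref{lem:uflcost}. Applying that lemma to both $O$ and $O'$, I would observe that the first three summands $3N + M + \frac{1}{W}\sum_{b_m\in B} w(b_m)$ depend only on $(B,w)$ and not on the choice of reasonable solution. Hence, when comparing $\pfl(C,F,O)$ and $\pfl(C,F,O')$, these common terms cancel, and the inequality $\pfl(C,F,O) > \pfl(C,F,O')$ is equivalent to
\[ \frac{c-1}{W}\sum_{b_m\in B_f(T_O)} w(b_m) > \frac{c-1}{W}\sum_{b_m\in B_f(T_{O'})} w(b_m). \]
Since $c>1$ by the construction of $d$, the factor $(c-1)/W$ is strictly positive, so this inequality is in turn equivalent to $\sum_{b_m\in B_f(T_O)} w(b_m) > \sum_{b_m\in B_f(T_{O'})} w(b_m)$.

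Next, I would turn the left-hand side of the claimed biconditional into the same form. By definition $w(B,T) = \sum_{b_m\in B_t(T)} w(b_m)$, and since $B_t(T_O)$ and $B_f(T_O)$ partition $B$, we have
\[ w(B,T_O) = \sum_{b_m\in B} w(b_m) - \sum_{b_m\in B_f(T_O)} w(b_m), \]
and analogously for $T_{O'}$. Because $\sum_{b_m\in B} w(b_m)$ does not depend on the assignment, $w(B,T_O) < w(B,T_{O'})$ is equivalent to $\sum_{b_m\in B_f(T_O)} w(b_m) > \sum_{b_m\in B_f(T_{O'})} w(b_m)$.

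Combining the two equivalences yields the corollary. There is no real obstacle to this argument; the content is entirely in Lemma \ref{lem:uflcost}, and the only thing one must be careful about is keeping track of the direction of the inequality (the sign flips once, because the unsatisfied-weight sum appears with a positive coefficient in $\pfl$ but with a negative sign in $w(B,T)$), and of the fact that $c>1$ is needed to preserve strict inequality in the scaling step.
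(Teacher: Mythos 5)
Your proposal is correct and follows essentially the same route as the paper: both arguments isolate the single $O$-dependent summand $\frac{c-1}{W}\sum_{b_m\in B_f(T_O)} w(b_m)$ from Lemma~\ref{lem:uflcost}, use $c>1$ to preserve the strict inequality, and convert $w(B,T_O)$ to the unsatisfied-weight sum via the partition $B = B_t(T_O)\cup B_f(T_O)$. No differences worth noting.
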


\begin{proof}
	Observe that 
	\begin{align*} 
	w(B,T_O) = &\sum_{b_m\in B_t(T_O)} w(b_m) < \sum_{b_m\in B_t(T_{O'})} w(b_m) = w(B, T_{O'}) \\
	\Leftrightarrow &\sum_{b_m\in B_f(T_O)} w(b_m) > \sum_{b_m\in B_f(T_{O'})} w(b_m) \ . 
	\end{align*}
	Observe, that, by Lemma~\ref{lem:uflcost}, the only summand of the cost $\pfl(C,F,O)$ of a reasonable solution $O$ actually depending on $O$ is $(c-1)/W\sum_{b_m\in B_f(T_O)} w(b_m)$.
	Since $c > 1$, we obtain
	\begin{align*}
		\sum_{b_m\in B_f(T_O)} w(b_m) &> \sum_{b_m\in B_f(T_{O'})} w(b_m) \\
		\Leftrightarrow \pfl(C,F,O) &> \pfl(C,F,O') \ . 
	\end{align*}
\end{proof}

\begin{lemma}\label{lem:fllocalopt}
	If $O\subset F$ is locally optimal for $\Phi(B,W)$, then $O$ is reasonable.
\end{lemma}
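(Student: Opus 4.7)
The plan is to prove the contrapositive: any $O$ that fails to be reasonable admits a strictly improving single-swap neighbour. Since $|F|=2N$ and reasonableness requires $|O|=N$ together with at least one literal per variable, failure of reasonableness forces at least one of two (possibly overlapping) structural defects: (i) some variable $x_n$ has $x_n\notin O$ and $\bar x_n \notin O$; or (ii) some variable $x_n$ has both $x_n,\bar x_n\in O$. I would handle the two cases by the open move and the close move respectively.

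For case (i), the improving move is to open $x_n$. By the distance structure, the only facilities within distance less than $2$ of the client co-located at $x_n$ are $x_n$ and $\bar x_n$, neither of which is in $O$; the same holds for the client at $\bar x_n$. Hence those two literal clients are currently served at distance $2$, and the new facility drops their service cost by $2$ and $1$ respectively, saving $3$. All other literal clients are unaffected (still served by their own facility or its negation, already in $O$), and clause clients can only improve when an additional facility is opened. Against the additional opening cost of $2$, the net change is at most $-1$, contradicting local optimality. (The corner case $O=\emptyset$ is handled a fortiori, as any open move reduces cost from $+\infty$.)

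For case (ii), the improving move is to close $x_n$. The opening cost decreases by $2$. The literal client at $x_n$ now falls back to $\bar x_n$ at distance $1$, paying an extra $1$; no other literal client is affected. For each clause $b_m$, removal of $x_n$ can increase service cost only if $x_n$ was a cheapest open facility for $b_m$, which forces $x_n\in b_m$; in that event the client falls back either to another literal of $b_m$ still in $O$ (no increase) or to $\bar x_n$, incurring extra cost at most $(c-1)w(b_m)/W$. Summing over $b_m\in B(x_n)$ and using $W=M\wmax$, the total clause-side increase is at most $(c-1)/W\cdot\sum_{b_m\in B}w(b_m)\le c-1$. So the service cost grows by at most $1+(c-1)=c<2$, while opening cost drops by $2$, again contradicting local optimality.

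The main obstacle is the bookkeeping in case (ii): one must carefully identify, clause by clause, which open facility becomes the new server after $x_n$ is removed, and then invoke the scaling $W=M\wmax$ so that the aggregate clause-scale increase is strictly dominated by the unit-scale opening savings. Case (i) is essentially a monotonicity observation, so this is where the design of the distance function—particularly the gap between $1+w(b_m)/W$ and $1+c\,w(b_m)/W$—is doing real work.
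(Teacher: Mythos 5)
Your proposal is correct and follows essentially the same route as the paper: close $x_n$ when both $x_n,\bar x_n\in O$ (bounding the clause-side increase by $(c-1)\sum_{b_m\in B(x_n)}w(b_m)/W < 1$ via $W=M\wmax$, plus $1$ for the literal client, against the saved opening cost of $2$), and open $x_n$ when neither is present (saving $3$ in service cost against $2$ in opening cost). The only cosmetic difference is the order of the two cases and your slightly more explicit bookkeeping of which facility serves each clause after the close move.
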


\begin{proof}
	The following proof is presented in two steps.
	First, we argue that no locally optimal solution can contain both a literal and its negation.
	Second, we show that every locally optimal solution contains a facility corresponding to each of the variables.
	
	Assume, there is an $n$ such that $x_n, \bar x_n\in O$.
	Observe, that $B(x_n)\leq M$ and that no client in $C\setminus (B(x_n)\cup\{x_n\})$ is closer to $x_n$ than it is to $\bar x_n$.
	Since $1 < c < 2$ and by definition of $W$, we obtain
	\begin{align*}
		\pfl(C,F,O) &= \sum_{c\in C\setminus (B(x_n)\cup\{x_n\})} d(c,O) + \sum_{b_m\in B(x_n)} \left( 1 + \frac{w(b_m)}{W}\right) + \abs{O}2 \\
		&>\sum_{c\in C\setminus (B(x_n)\cup\{x_n\})} d(c,O) + \sum_{b_m\in B(x_n)} \left( 1 + \frac{c\cdot w(b_m)}{W}\right) + \underbrace{1}_{d(x_n,\bar x_n)} + (\abs{O}-1)2 \\
		&= \pfl(C,F,O\setminus \{x_n\})
	\end{align*}

	Now assume, that $x_n, \bar x_n \not\in O$.
	Connecting the clients located at $x_n$ and $\bar x_n$ to a facility newly opened at $x_n$ is sufficient to reduce the overall cost.
	We obtain
	\begin{align*}
		\pfl(C,F,O) &= \sum_{c\in C\setminus\{x_n,\bar x_n\}} d(c,O) + \underbrace{\sum_{c\in \{x_n,\bar x_n\}} d(c,O)}_{ = 4} + \abs{O}2 \\
		&> \sum_{c\in C\setminus\{x_n,\bar x_n\}} d(c,O) + \underbrace{1}_{d(x_n,\bar x_n)} + (\abs{O}+1)2 \\
		&\geq \pfl(C,F,O\cup\{x_n\})
	\end{align*}
\end{proof}

\begin{corollary}
	If $O$ is locally optimal for $\Psi(B,w)$, then $T_O$ is locally optimal for $(B,w)$.
\end{corollary}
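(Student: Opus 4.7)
The plan is to argue by contrapositive: assume $T_O$ is not locally optimal for $(B,w)$ and produce a single-swap neighbour $O'$ of $O$ whose MUFL cost is strictly smaller, contradicting local optimality of $O$.

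First, I would invoke Lemma~\ref{lem:fllocalopt} to conclude that $O$ is reasonable, so for every $n$ exactly one of $x_n, \bar x_n$ is in $O$. If $T_O$ is not locally optimal for \SATF, there is a single variable $x_n$ whose flip yields an assignment $T'$ with $w(B,T') > w(B,T_O)$. Translating a flip of $x_n$ through $\Psi$ corresponds to swapping the currently-open facility in $\{x_n,\bar x_n\}$ for the currently-closed one; call the resulting set $O'$. By construction $O'$ is again reasonable, is obtained from $O$ by a single swap (hence a neighbour in \MUFLS), and satisfies $T_{O'}=T'$.

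Then I would apply Corollary~\ref{cor:flcost} to the pair $O,O'$: since both are reasonable and $w(B,T_O) < w(B,T_{O'})$, we get $\pfl(C,F,O) > \pfl(C,F,O')$. This contradicts the assumed local optimality of $O$ in \MUFLS, completing the proof.

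There is no real obstacle here — the work has already been done in Lemma~\ref{lem:fllocalopt} (locally optimal implies reasonable) and Corollary~\ref{cor:flcost} (on reasonable solutions, MUFL cost and SAT cost are anti-monotone). The only minor subtlety is checking that a variable flip indeed corresponds to a legal single swap, which is immediate from the definition of reasonable solutions and of $\Psi$.
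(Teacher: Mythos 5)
Your proposal is correct and is exactly the argument the paper intends: its proof is the one-line ``Combine Corollary~\ref{cor:flcost} and Lemma~\ref{lem:fllocalopt}'', and you have simply spelled out that combination (local optimality gives reasonableness, a profitable variable flip translates to a profitable single swap between reasonable solutions, contradiction). No differences worth noting.
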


\begin{proof}
	Combine Corollary~\ref{cor:flcost} and Lemma~\ref{lem:fllocalopt}.
\end{proof}

\section{$K$-Means}

We complement our results by showing that we can obtain tight PLS-complete\-ness for \KMS, as well.

\begin{proposition}\label{prop:km}
	\SATF $\leq_{PLS}$ \KMS and this reduction is tight.
\end{proposition}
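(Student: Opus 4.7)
The plan is to mirror the $(\Phi,\Psi)$ construction from Proposition~\ref{prop:mufl}, adapting it to the Euclidean squared-distance setting with a fixed cardinality constraint $K=N$. The map $\Psi$ will be identical: given $O\subseteq C$ with $\abs{O}=N$, assign $x_n$ the value \emph{true} iff $x_n\in O$, and \emph{false} otherwise. The map $\Phi$ will embed the point set $C = F\cup B = \{x_n,\bar x_n\}_{\OneTo{n}{N}}\cup B$ into some $\R^D$ so that the squared Euclidean distances qualitatively replicate those used in the MUFL reduction: each pair $x_n,\bar x_n$ at squared distance $1$; each clause point $b_m$ at squared distance $1+w(b_m)/W$ from every literal it contains and $1+c\cdot w(b_m)/W$ from the negations of those literals (for some $1<c<2$); and all remaining pairs at a common larger squared distance. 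Producing such an explicit embedding is the main technical obstacle; the natural approach is to place each literal pair on a short edge along its own orthogonal direction and then position each clause point as a perturbation of a midpoint-like combination of the literals it touches, absorbing the remaining slack into additional ``padding'' coordinates. The dimension lower bound announced in the introduction reflects that this cannot be done in too few coordinates.

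With the instance in hand, I would redefine $O$ to be \emph{reasonable} if $\abs{O}=N$ and exactly one of $x_n,\bar x_n$ lies in $O$ for every $n$. Note that under the swap-only neighbourhood the cardinality condition is automatic, but $\abs{O}=N$ alone does not imply reasonability, as $O$ could contain both literals of some variable while missing both literals of another. I would then establish an analog of Lemma~\ref{lem:uflcost}, expressing $\pkm(C,O)$ for reasonable $O$ as a constant depending only on $N$, $M$, and $\sum_{b_m\in B} w(b_m)$, plus the single variable term $\frac{c-1}{W}\sum_{b_m\in B_f(T_O)}w(b_m)$. The opening-cost term of Lemma~\ref{lem:uflcost} disappears, while the contributions from literal-clients (each at squared distance $0$ or $1$) and from clause-clients are computed exactly as before. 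An analog of Corollary~\ref{cor:flcost} follows at once.

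The final and most delicate step is to show that every locally optimal $O$ is reasonable. Because the neighbourhood permits only swaps, the two cases of Lemma~\ref{lem:fllocalopt} must be merged: if $x_n,\bar x_n\in O$ while $x_m,\bar x_m\notin O$ for some $n\ne m$, I would argue that swapping $x_n$ for $x_m$ strictly decreases $\pkm$. Removing $x_n$ barely raises the cost since $\bar x_n$ stays in $O$ and absorbs most of $x_n$'s former clients at squared distance $1$, while inserting $x_m$ produces a large saving at the clients located at $x_m$ and $\bar x_m$ and at the clauses in $B(x_m)\cup B(\bar x_m)$. The main care lies in bounding the cross-terms introduced by squaring distances so that the swap saving dominates; this is where the Euclidean construction must be tuned, and is the principal technical hurdle compared to the facility-location argument. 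Once reasonability of local optima is established, the DKM analog of the final corollary is immediate, and tightness is obtained by rerunning the tightness argument from \cite{brauer17} with the modifications used in the present paper.
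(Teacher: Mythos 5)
Your skeleton (same $\Psi$, the notion of reasonable solutions, a cost formula for reasonable solutions, the monotonicity corollary, and ``local optima are reasonable'') matches the paper's, but your proof of the last step has a genuine gap. In \KM every point of $C=\{x_n,\bar x_n\}\cup B$ is a candidate centre, so a non-reasonable solution with $\abs{O}=N$ need not contain both literals of any variable: it may instead place a centre at a \emph{clause point} $b_m$ while covering each remaining variable exactly once. Your case analysis (``swap $x_n$ for $x_m$ when $x_n,\bar x_n\in O$ and $x_m,\bar x_m\notin O$'') never touches this situation, so it does not show that such solutions admit an improving swap. The paper's proof of the corresponding lemma is organised around exactly this dichotomy: if some variable is uncovered, then either some $b_m\in O$ (Case 1, with three sub-cases depending on how many of the four literals adjacent to $b_m$ are covered, resolved by swapping $b_m$ for one of its literals) or no clause point is open and hence some variable is doubly covered (Case 2, your swap).

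The missing case is also why you cannot simply ``qualitatively replicate'' the MUFL distances. With clause--literal distances of the form $1+w(b_m)/W$ and a common ``else'' distance of $2$, a clause centre $b_m$ serves its four adjacent literal points at cost roughly $4$, whereas after swapping $b_m$ for one of its literals $x_o$ the two points $x_p,\bar x_p$ of the other variable jump to cost $2$ each; the swap then \emph{increases} the objective, and local optimality no longer forces reasonability. The paper avoids this by compressing every clause-related distance into the narrow band $\left[1+\tfrac{3}{2}\epsilon,\,1+2\epsilon\right]$ with $\epsilon=1/(4N+2M)$ and setting the ``else'' distance to $1+2\epsilon$, so that evicting a clause centre costs only $O(\epsilon)$ per adjacent point while the literal-pair saving remains of order $1$. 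Finally, the Euclidean embedding you flag as the main obstacle is not where the difficulty lies: the paper treats $C$ as an abstract metric and cites \cite{brauer17} for the (high-dimensional) isometric embedding into $\ell_2^2$; the dimension bound of Theorem~\ref{thm:equidistances} is a separate lower-bound result, not part of the reduction's correctness.
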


This reduction is similar to the previously presented reduction for MUFL. 
We mostly have to change some of the constants involved, and finally argue that there is a point set in $\R^D$ exhibiting the required interpoint distances.

\subsection{Constructing $\Phi$ and $\Psi$}

First, we construct the function $\Phi$ mapping a \SAT instance to a \KM instance.
Let $(B,w)$ be a \SAT instance over the variables $\{x_n\}_{\OneTo{n}{N}}$.
Let further $M\defeq \abs{B}$ and $W \defeq M\cdot\wmax$.
We assume that $M \geq 2$.
In the following, we construct an instance $(C,K)\in$ \KM.
Abstractly define the point set $C = \{x_n,\bar x_n\}_{\OneTo{n}{N}}\cup B$.
The distance function $d:C\times C\rightarrow \R$ is defined as
\[ d(p,q) = d(q,p) = 
	\begin{cases}
		0 \cif p = q \\
		1 \cif p = x_n \wedge q = \bar x_n \\
		1+\epsilon \left(\frac{3}{2}+\frac{w(b_m)}{2W}\right)\cif (p = x_n \vee p = \bar x_n) \wedge q = b_m \wedge p\in b_m \\
		1+\epsilon \left(\frac{3}{2}+\frac{c\cdot w(b_m)}{2W}\right) \cif (p = x_n \vee p = \bar x_n) \wedge q = b_m \wedge \bar p\in b_m \\
		1+2\epsilon & \text{else,}
	\end{cases}
\]
where $1 < c < 2$ and 
\[ \epsilon = \frac{1}{4N+2M} \ . \]

One can easily see that the points closest to each other are literals and their negation, that clauses are closer to literals the contain, then to the literal's negation, and the all other point pairs have the same, even larger, distance to each other.

Finally, we choose $K = N$.

Second, we construct the function $\Psi$ mapping solutions of $\Phi(B,w)$ back to solutions of $(B,w)$.
Given a set $O\subset C$ we let each variable $x_n$ be true if $x_n\in O$ and let it be false otherwise.

In the following, we denote $\Phi(B,w) = (C,N)$ and $\Psi(B,w,O) = T_O$.

\subsection{$(\Phi,\Psi)$ is a PLS-reduction}

Recall the definition of reasonable solutions and observe that all reasonable solutions are also feasible solutions for the DKM instance $(C,N)$.

\begin{lemma}\label{lem:dkmcost}
	If $O\subset C$ is reasonable, then
	\[ \pkm(C,O) = N + M\left(1+\epsilon\frac{3}{2}\right) + \frac{\epsilon}{2W}\sum_{b_m\in B} w(b_m) + \frac{\epsilon}{2W}(c-1)\sum_{b_m \in B_f(T_O)} w(b_m) \ . \]
\end{lemma}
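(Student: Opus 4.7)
The plan is to mirror the proof of Lemma~\ref{lem:uflcost} almost verbatim, adjusting only the constants and the fact that DKM has no opening cost. The three ingredients are (i) that a reasonable $O$ with $|O|=N$ has exactly one of $x_n,\bar x_n$ in $O$ for each $n$, (ii) that the client corresponding to the omitted literal lies at distance $1$ from its partner in $O$, and (iii) that the closest open point to a clause client $b_m$ is a literal that appears in $b_m$.

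First I would handle the literal clients. By reasonableness, for each $n$ exactly one of $x_n,\bar x_n$ is in $O$ and contributes $0$, while its partner contributes $d(x_n,\bar x_n)=1$. Summing over $n$ gives the term $N$.

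Next I would locate the nearest open point for each clause $b_m$. If $b_m\in B_t(T_O)$, at least one of its literals is in $O$, contributing $1+\epsilon(3/2+w(b_m)/(2W))$; if $b_m\in B_f(T_O)$, neither literal is in $O$, so the nearest open point is a negated-literal at distance $1+\epsilon(3/2+c\cdot w(b_m)/(2W))$. Here I need to verify that these distances are indeed smaller than the fallback value $1+2\epsilon$ assigned to all other pairs; this follows because $w(b_m)/W\leq 1/M\leq 1/2$ and $c<2$, so $3/2+c\cdot w(b_m)/(2W)<2$. This is the only nontrivial verification and I expect it to be the main (very mild) obstacle — one just has to keep track of the constants to be sure the ``else''-case distance never intercedes.

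Finally I would sum:
\begin{align*}
\pkm(C,O) &= N + \sum_{b_m\in B_t(T_O)}\!\left(1+\epsilon\left(\tfrac{3}{2}+\tfrac{w(b_m)}{2W}\right)\right) + \sum_{b_m\in B_f(T_O)}\!\left(1+\epsilon\left(\tfrac{3}{2}+\tfrac{c\cdot w(b_m)}{2W}\right)\right)
\end{align*}
and collect terms. The $M(1+\epsilon\cdot 3/2)$ piece comes out immediately; splitting the remaining $w$-weighted summands as $\sum_{B_t}w+\sum_{B_f}w=\sum_B w$ plus a residual $(c-1)\sum_{B_f}w$ yields exactly the claimed formula.
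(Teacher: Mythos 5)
Your proposal is correct and follows essentially the same route as the paper: decompose the cost into the literal points (contributing $N$ by reasonableness) and the clause points (contributing $1+\epsilon(3/2+w(b_m)/(2W))$ or $1+\epsilon(3/2+c\cdot w(b_m)/(2W))$ depending on satisfaction), then collect terms. Your explicit check that $3/2+c\cdot w(b_m)/(2W)<2$ (using $w(b_m)/W\leq 1/M\leq 1/2$ and $c<2$), so that the nearest open point is never one of the distance-$(1+2\epsilon)$ points, is a detail the paper leaves implicit, and it is verified correctly.
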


\begin{proof}
	By definition of $\Psi$ and reasonable solutions, we have that a variable $x_n$ is assigned \emph{true} if and only if $x_n\in O$ and is assigned \emph{false} if and only if $\bar x_n \in O$.
	We obtain that $\pkm(\{x_n,\bar x_n\}_{\OneTo{n}{N}},O) = N$, since each point corresponding to a literal is either in $O$ and has cost $0$, or its negated literal at distance $1$ is in $O$ and it has cost $1$.
	It is easy to see, by definition of the point set and $\Psi$, that a point corresponding to a clause $b_m\in B_t(T_O)$ has at least one mean at distance $1+\epsilon (3/2 + w(b_m)/(2W))$ and that a point corresponding to a clause $b_m\in B_f(T_O)$ has two means at distance $1+\epsilon (3/2 + c\cdot w(b_m)/(2W))$ and the rest at distance $1+2\epsilon$.
	We obtain
	\begin{align*}
		\pkm(C,O) &= \pkm(\{x_n,\bar x_n\}_{\OneTo{n}{N}},O) + \pkm(B,O) \\
		&= N + \sum_{b_m\in B_t(T_O)} \left( 1+\epsilon \left(\frac{3}{2}+\frac{w(b_m)}{2W}\right) \right) + \sum_{b_m\in B_f(T_O)} \left( 1+\epsilon \left(\frac{3}{2}+\frac{c\cdot w(b_m)}{2W}\right) \right) \\
		&= N + M\left(1+\epsilon\frac{3}{2}\right) + \sum_{b_m\in B_t(T_O)} \epsilon \frac{w(b_m)}{2W} + \sum_{b_m\in B_f(T_O)} \epsilon \frac{c\cdot w(b_m)}{2W} \\
		&= N + M\left(1+\epsilon\frac{3}{2}\right) + \frac{\epsilon}{2W}\sum_{b_m\in B} w(b_m) + \frac{\epsilon}{2W}(c-1)\sum_{b_m \in B_f(T_O)} w(b_m)
	\end{align*}
\end{proof}

\begin{corollary}
	If $O,O'\subset C$ are reasonable solutions for $\Phi(B,w)$, then
	\[ w(B,T_O) < w(B, T_{O'}) \Leftrightarrow \pkm(C,O) > \pkm(C,O') \ . \]
\end{corollary}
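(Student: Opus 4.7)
The plan is to mimic the proof of Corollary~\ref{cor:flcost} almost verbatim, using Lemma~\ref{lem:dkmcost} in place of Lemma~\ref{lem:uflcost}. Inspecting the cost formula
\[ \pkm(C,O) = N + M\left(1+\epsilon\tfrac{3}{2}\right) + \frac{\epsilon}{2W}\sum_{b_m\in B} w(b_m) + \frac{\epsilon}{2W}(c-1)\sum_{b_m \in B_f(T_O)} w(b_m), \]
the first three summands depend only on the instance $(B,w)$ (they are independent of $O$), so the only $O$-dependent term is the last one. Hence comparing $\pkm(C,O)$ and $\pkm(C,O')$ reduces to comparing $\sum_{b_m\in B_f(T_O)} w(b_m)$ and $\sum_{b_m\in B_f(T_{O'})} w(b_m)$, with the direction of the inequality preserved since the coefficient $\tfrac{\epsilon}{2W}(c-1)$ is strictly positive (recall $c>1$, $\epsilon>0$, $W>0$).

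It then remains to translate between the $B_f$ sums and SAT cost. Since $w(B,T_O)=\sum_{b_m\in B_t(T_O)} w(b_m)$ and $B_t(T_O)\cup B_f(T_O)=B$ is a disjoint partition, we have
\[ w(B,T_O) + \sum_{b_m\in B_f(T_O)} w(b_m) = \sum_{b_m\in B} w(b_m), \]
and likewise for $O'$. Subtracting these identities gives
\[ w(B,T_O) < w(B,T_{O'}) \Leftrightarrow \sum_{b_m\in B_f(T_O)} w(b_m) > \sum_{b_m\in B_f(T_{O'})} w(b_m). \]
Chaining this with the previous equivalence yields exactly the claim.

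There is no real obstacle here: the argument is purely algebraic once Lemma~\ref{lem:dkmcost} is in hand, and it is identical in structure to Corollary~\ref{cor:flcost}, only with the constant factor $(c-1)/W$ replaced by $\epsilon(c-1)/(2W)$, which is still positive and therefore preserves the direction of the strict inequality.
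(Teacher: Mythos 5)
Your proposal is correct and matches the paper's approach exactly: the paper proves this corollary by declaring it analogous to Corollary~\ref{cor:flcost}, and your argument is precisely that analogue, isolating the single $O$-dependent summand of Lemma~\ref{lem:dkmcost} with its positive coefficient $\epsilon(c-1)/(2W)$ and converting between $B_t$ and $B_f$ sums via the partition of $B$.
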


\begin{proof}
	Analogous to the proof of Corollary~\ref{cor:flcost}
\end{proof}

\begin{lemma}
	If $O\subset C$ is locally optimal for $\Phi(B,w)$, then $O$ is reasonable.
\end{lemma}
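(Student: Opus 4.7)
The plan is to mirror Lemma~\ref{lem:fllocalopt}: assume $O$ is not reasonable and exhibit a single swap that strictly decreases $\pkm(C,O)$. Feasibility forces $|O|=K=N$, so writing $p$, $q$, $s$ for the numbers of variables with both, one, or zero literals in $O$ and $r\defeq|O\cap B|$, the identities $2p+q+r=N$ and $p+q+s=N$ give $s=p+r$; thus any non-reasonable $O$ has $p\geq 1$ (some duplicate pair $x_n,\bar x_n\in O$) or $r\geq 1$ (some clause $b_{m^*}\in O$), and in either case there exists an ``empty'' variable $n^*$ with $x_{n^*},\bar x_{n^*}\notin O$.

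In the duplicate-pair case I would perform the swap $(O\setminus\{x_n\})\cup\{x_{n^*}\}$. Among literal clients only four are touched: $x_{n^*}$ drops from its empty-variable cost (at least $1+3\epsilon/2$) to $0$, $\bar x_{n^*}$ drops to $1$ via $x_{n^*}$, and $x_n$ rises from $0$ to $1$ via $\bar x_n$, giving a net literal change of at most $-3\epsilon$; all other literal clients are unaffected since both the removed and added points sit at the ``far'' distance $1+2\epsilon$ from them. Clause losses occur only for $b_m$ containing $x_n$, and since $\bar x_n$ stays open at the medium distance each such loss is at most $\epsilon(c-1)w(b_m)/(2W)$; summing and using $W=M\wmax$ keeps the total clause loss below $\epsilon/2$, so the net change is at most $-5\epsilon/2<0$.

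In the clause case ($p=0$ and $r\geq 1$) I would pick $b_{m^*}\in O$ and split according to whether some literal of $b_{m^*}$ has an empty variable. If yes, swap $b_{m^*}$ for that literal $l$, so that $b_{m^*}$ is re-served at its close distance via $l$ and the pair $l,\bar l$ is fully covered; if the other literal of $b_{m^*}$ also belongs to an empty variable, its loss is absorbed using $w(b_m)/W\leq 1/M$ with $M\geq 2$. Otherwise both variables of $b_{m^*}$ are non-empty, so some literal from each is already in $O\setminus\{b_{m^*}\}$, and swapping $b_{m^*}$ for $x_{n^*}$ leaves $b_{m^*}$ at at most a medium distance while the empty pair $x_{n^*},\bar x_{n^*}$ is freshly covered, again yielding a favourable $\Theta(\epsilon)$ balance. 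In both sub-cases no other clause is affected because all inter-clause distances equal the maximum $1+2\epsilon$.

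The main obstacle is the tightness of the constants. Every relevant gain and loss is of order $\epsilon=1/(4N+2M)$, and a naive swap such as $b_{m^*}\to x_{n^*}$ with $x_{n^*}$ unrelated to $b_{m^*}$ fails by a margin of $\epsilon$ because $b_{m^*}$ lands at the full far distance $1+2\epsilon$ and wipes out the gain. Identifying a swap that re-serves the removed point at most at medium distance, and carefully accounting for the $\epsilon(c-1)w(b_m)/(2W)$ contributions from clauses containing the removed literal, is what drives the sub-case analysis; the final inequality always reduces to $\epsilon$ times a quantity of the form $(\text{at most }1/M)-3/2$, which is negative under the standing assumption $M\geq 2$.
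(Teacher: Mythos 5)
Your proposal is correct and follows essentially the same route as the paper: both arguments locate an empty variable, split into the clause-in-$O$ and duplicate-pair cases, and perform the same swaps (clause point for a literal, or duplicate literal for a literal of the empty variable) with the same $\Theta(\epsilon)$ accounting using $w(b_m)/(2W)\leq 1/(2M)$ and $M\geq 2$. Your explicit counting identity $s=p+r$ just makes precise what the paper's Case 2 infers implicitly, and your sub-cases B1/B2 correspond to the paper's Cases 1.1--1.3.
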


\begin{proof}
	Recall, that each point $b_m\in C$ has exactly two points at distance $1+\epsilon (3/2 + w(b_m)/(2W))$ and two points at distance $1+\epsilon (3/2 + c\cdot w(b_m)/(2W))$ (the points corresponding to the literals in the clause $b_m$ and their negations, respectively).
	In the following, we call these four points \emph{adjacent} to $b_m$.
	All the other points have distance $1+2\epsilon$ to $b_m$ and are hence strictly farther away.
	Assume to the contrary that there exists an $\OneTo{n}{N}$, such that $x_n,\bar x_n\not\in O$.

	\paragraph{Case 1:}
	There exists an $\OneTo{m}{M}: b_m \in O$, such that $b_m = \{x_o,x_p\}$ (where one or both of these literals might be negated).
	One important observation is that if we exchange $b_m$ for some other location then only its own cost and the cost of its adjacent points can increase.
	All other points, which might be connected to $b_m$, are at distance $1+2\epsilon$ and can hence be connected to any other location for at most the same cost.

	\paragraph{Case 1.1:}
	$x_o,\bar x_o, x_p, \bar x_p \not\in  O$.
	Each point adjacent to $b_m$ has distance at least $1+\epsilon (3/2 + \wmin/(2W))$ to every other points in $P$.
	Hence, we have that $\pkm( \{b_m,x_o,\bar x_o,x_p,\bar x_p\}, O) \geq 4+4\epsilon (3/2 + \wmin/(2W)) > 4+6\epsilon$.
	However,
	\[ \phi(\{b_m,x_o,\bar x_o, x_p, \bar x_p\},\{x_o\}) = 1+\epsilon (\underbrace{3/2 + w(b_m)/(2W)}_{<2}) + 1 + 2 + 4\epsilon < 4+6\epsilon \ , \]
	and hence $(O\setminus\{b_m\})\cup\{x_o\}$ is in the neighbourhood of $O$ and has strictly smaller cost.

	\paragraph{Case 1.2:}
	$x_p\in O \vee \bar x_p\in O$ and $x_o,\bar x_o\not\in O$.
	In this case, removing $b_m$ from $O$ does not affect the cost of $x_p$ and $\bar x_p$.
	We obtain $\phi(\{b_m,x_o,\bar x_o\},O) \geq 2+2\epsilon (3/2 + \wmin/(2W)) > 2+3\epsilon$.
	Observe, that
	\[ \phi(\{b_m,x_o,\bar x_o\},(O\setminus\{b_m\})\cup\{x_o\}) \leq 1+\epsilon(3/2 + w(b_m)/(2W)) + 1 < 2 + 2\epsilon \ . \]

	\paragraph{Case 1.3:}
	$x_p\in O \vee \bar x_p\in O$ and $x_o\in O \vee \bar x_o\in O$.
	Here we have that removing $b_m$ from $O$ does not affect the cost of its adjacent points at all.
	However, similar to before we have $\phi(\{b_m,x_n,\bar x_n\},O) \geq 2+2\epsilon (3/2 + \wmin/(2W)) > 2+3\epsilon$.
	Again, we obtain
	\[ \phi(\{b_m,x_n,\bar x_n\},(C\setminus\{b_m\})\cup\{x_n\}) \leq 1+\epsilon(3/2 + c\cdot w(b_m)/(2W)) + 1 < 2 + 2\epsilon \ . \]

	\paragraph{Case 2:}
	There is no $\OneTo{m}{M}$, such that $b_m\in O$.
	Consequently, there is an $\OneTo{o}{N}, o\neq n: x_o,\bar x_o\in O$.
	W.l.o.g. assume that $\abs{B(x_o)} < M$ (otherwise just exchange $x_o$ for $\bar x_o$ in the following argument).
	Observe that
	\[ \phi(B(x_o)\cup\{x_o,x_n,\bar x_n\},O) = 2 + 4\epsilon + \sum_{b_m\in B(x_o)} 1 + \epsilon (3/2 + w(b_m)/(2W))  \ . \]
	The only points affected by removing $x_o$ from $O$ are $x_o$ and the points corresponding to clauses in $B(x_o)$.
	Hence, 
	\begin{align*}
	\phi(C\setminus(B(x_o)\cup\{x_o,x_n,\bar x_n\}),O) &= \phi(C\setminus(B(x_o)\cup\{x_o,x_n,\bar x_n\}),O\setminus\{x_o\}) \\
	&\geq \phi(C\setminus(B(x_o)\cup\{x_o,x_n,\bar x_n\}),(O\setminus\{x_o\})\cup\{x_n\}) \  .
	\end{align*}
	However, recall that the points in $B(x_o)$ are at distance $1 + \epsilon (3/2 + c\cdot w(b_m)/(2W))$ from $\bar x_o\in O$.
	We obtain
	\begin{align*}
		&\phi(B(x_o)\cup\{x_o,x_n,\bar x_n\},(O\setminus\{x_o\})\cup\{x_n\}) \\
		&\leq \phi(B(x_o)\cup\{x_o,\bar x_n\},\{\bar x_o,x_n\}) \\
		&= 2 + \sum_{b_m\in B(x_o)} 1 + \epsilon (3/2 + c\cdot w(b_m)/(2W)) \\
		&= 2 + \sum_{b_m\in B(x_o)} 1 + \epsilon (3/2 + w(b_m)/(2W)) + (c-1)\epsilon/(2W) \underbrace{\sum_{b_m\in B(x_o)} w(b_m)}_{< M\wmax} \\
		&< 2 + \epsilon + \sum_{b_m\in B(x_o)} 1 + \epsilon (3/2 + w(b_m)/(2W)) \\
		&< \phi(B(x_o)\cup\{x_o,x_n,\bar x_n\},O) \ .
	\end{align*}

\end{proof}

\subsection{Embedding $C$ into $\ell_2^2$}\label{subsec:embedding}

So far, we regarded $C$ as an abstract point set, only given by fixed pairwise interpoint distances.
Using the results presented in \cite{brauer17} one can easily see that point set $C$ presented here can be embedded isometrically into squared Euclidean space, as well.
We complement this by showing a lower bound on the number of dimensions required to embed $C$ which asymptotically matches the number of dimensions of a possible embedding.

\begin{theorem}\label{thm:equidistances}
	If $X\subset\R^D$ is a set of pairwise equidistant points, then $\abs{X} \leq D+1$.
\end{theorem}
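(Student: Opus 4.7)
The plan is to prove this via a standard Gram matrix / linear independence argument, which is the natural approach for a statement about equidistant points. Write $X = \{x_0, x_1, \ldots, x_k\}$ with common pairwise distance $r > 0$, translate so that $x_0$ is the origin, and consider the vectors $v_i \defeq x_i - x_0$ for $i = 1, \ldots, k$. I would aim to show that these $k$ vectors are linearly independent in $\R^D$, which immediately yields $k \leq D$ and hence $\abs{X} \leq D+1$.

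To establish linear independence I would compute the Gram matrix. From $\norm{v_i}^2 = r^2$ and $\norm{v_i - v_j}^2 = \norm{x_i - x_j}^2 = r^2$, expanding the second identity gives $\langle v_i, v_j \rangle = r^2/2$ for $i \neq j$. Hence the Gram matrix $G \in \R^{k \times k}$ of the $v_i$ has the form
\[ G = \frac{r^2}{2}(I_k + J_k), \]
where $J_k$ is the all-ones matrix. Since $J_k$ has eigenvalues $k$ (simple) and $0$ (multiplicity $k-1$), $I_k + J_k$ has eigenvalues $k+1$ and $1$, all strictly positive, so $G$ is positive definite.

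A positive definite Gram matrix forces its generating vectors to be linearly independent, which concludes the argument. I do not expect any real obstacle here; the only thing to be slightly careful about is the trivial case $r = 0$ (where $X$ is a single point and the bound holds vacuously), so I would handle $r > 0$ as the main case and dispose of $r = 0$ in one line. The rest is routine linear algebra and no new machinery beyond the inner product structure of $\R^D$ is needed.
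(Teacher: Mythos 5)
Your argument is correct, but it takes a genuinely different route from the paper. You give a direct, non-inductive linear-algebra proof: translating one point to the origin, computing the Gram matrix $\frac{r^2}{2}(I_k+J_k)$ of the difference vectors, and reading off positive definiteness from the eigenvalues $k+1$ and $1$ to conclude linear independence and hence $k\leq D$. The paper instead proceeds by induction on the dimension: it assumes $D+1$ equidistant points span a hyperplane $\cH$, shows via an inner-product computation that their centroid $\mu(X)$ is the unique point of $\cH$ equidistant from all of them, and then argues that any additional equidistant point must lie on the line through $\mu(X)$ orthogonal to $\cH$, of which only one of the two candidates can be used. Your approach is shorter, avoids the induction entirely, and sidesteps some delicate geometric claims the paper has to make (e.g.\ uniqueness of the equidistant point in $\cH$ and the dichotomy between the two points on the orthogonal line); it is the cleaner proof of the bound itself. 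The paper's geometric argument buys something in return: it locates exactly where a new equidistant point can sit, which directly explains the sharpness example given immediately after the theorem (the simplex-like configuration of $D+1$ points in $\R^D$). One small point of care in your write-up: the paper measures distance as \emph{squared} Euclidean distance, so ``equidistant'' there means common squared distance $c$; this changes nothing in your argument (set $r^2=c$), but it is worth stating explicitly so the two formulations visibly coincide.
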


\begin{proof}
	By prove this theorem by induction on the number of dimensions.
	The claim obviously holds for $D=1$ (there are at most $2$ pairwise equidistant points on a line).
	Assume that the claim holds for any fixed $D\in\N$.
	Let $c\in\R_{>0}$ be a constant, $X \subset\R^{D+1}$ be a set of $D+1$ equidistant points and let $\cH$ be the $D$-dimensional hyperplane spanned by the points in $X$.
	We want to find another point $x_{D+2}\in\R^{D+1}$, such that $\forall x\in X: \norm{x-x_{D+2}}^2 = c$.
	By induction hypotheses, $\cH$ can not contain $x_{D+2}$.
	However, the mean $\mu(X) = \sum_{d=1}^{D+1} x_d / (D+1)$ is equidistant to all points in $X$.
	To see this, recall that $\norm{x-y}^2 = \norm{x}^2 + \norm{y}^2 - 2\sca{x}{y}$.
	For all $i\neq j$, we obtain $\sca{x_i}{x_j} = (\norm{x_i}^2 + \norm{x_j}^2 - c)/2$.
	Hence, for any fixed $x\in X$
	\begin{align*}
		\norm{x-\mu(X)}^2 &= \sca{x-\frac{1}{D+1}\sum_i x_i}{x-\frac{1}{D+1}\sum_j x_j} \\
		&= \frac{1}{(D+1)^2} \sum_i \sum_j \sca{x-x_i}{x-x_j} \\
		&= \frac{1}{(D+1)^2} \sum_{\substack{i \\ x_i\neq x}} \sum_{\substack{j \\ x_j\neq x}}\sca{x-x_i}{x-x_j} \\
		&= \frac{1}{(D+1)^2} \sum_{\substack{i \\ x_i\neq x}}\norm{x-x_i}^2  + \frac{1}{(D+1)^2}\sum_{\substack{i \\ x_i\neq x}} \sum_{\substack{j \neq i \\ x_j\neq x}}\sca{x-x_i}{x-x_j} \\
		&= \frac{cD}{(D+1)^2} + \frac{1}{(D+1)^2}\sum_{\substack{i \\ x_i\neq x}} \sum_{\substack{j \neq i \\ x_j\neq x}}\norm{x}^2 - \sca{x}{x_i} - \sca{x}{x_j} + \sca{x_i}{x_j} \\
		&= \frac{cD}{(D+1)^2} + \frac{1}{(D+1)^2}\sum_{\substack{i \\ x_i\neq x}} \sum_{\substack{j \neq i \\ x_j\neq x}}\frac{c}{2} =  \frac{cD}{(D+1)^2} \left(1 + \frac{D-1}{2}\right)\\
	\end{align*}
	We see that the points in $X$ all lie on the surface of a ball and that therefore the only point in $\cH$ having the same distance to all points in $X$ is $\mu(X)$.
	
	By definition of squared Euclidean distance, we know that the orthogonal projection $\pi_\perp^\cH(x_{D+2})$ of $x_{D+2}$ onto $\cH$ has to be equidistant to all points in $X$.
	From this, we obtain that $x_{D+2}$ has to lie on a line orthogonal to $\cH$ passing through $\mu(X)$.
	There are exactly two points on this line having distance $c$ to all points in $X$ (one on each side of $\cH$).
	Since the distance between these two points is larger than $c$, we can only add one of them to $X$ and retain pairwise equidistances.
\end{proof}

Observe, that the bound in Theorem~\ref{thm:equidistances} is sharp, since the set 
\[ \{(1,0,\dots,0),(0,1,0,\dots,0),\dots,(0,\dots,0,1), (\frac{\sqrt{D+1}+1}{D},\dots,\frac{\sqrt{D+1}+1}{D})\}\subset\R^D \]
is a set of $D+1$ points having pairwise distance $2$.

\begin{corollary}
	There is no isometric embedding of $C$ into squared Euclidean space using less then $\max\{N,M\}-1$ dimensions.
\end{corollary}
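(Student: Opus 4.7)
The plan is to exhibit a pairwise equidistant subset of $C$ of size $\max\{N,M\}$ and then invoke Theorem~\ref{thm:equidistances} contrapositively: any set of $k$ pairwise equidistant points requires at least $k-1$ dimensions when embedded isometrically into squared Euclidean space.

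First I would inspect the distance function defining $C$ and identify which subsets fall entirely into the ``else'' case with common distance $1+2\epsilon$. A pair $(p,q)$ lands in this case precisely when neither $p=\bar q$, nor one of them is a literal contained (or negated) in the other viewed as a clause. Two obvious candidates then stand out: the set $P = \{x_1,\dots,x_N\}$ of positive literals, and the set $B = \{b_1,\dots,b_M\}$ of clause points. For distinct $i\neq j$, the pair $(x_i,x_j)$ is not a literal/negation pair and neither is a clause, so by the distance definition $d(x_i,x_j)=1+2\epsilon$; thus $P$ is pairwise equidistant and $|P|=N$. Similarly, for distinct clause points $(b_i,b_j)$ none of the special cases applies, so $d(b_i,b_j)=1+2\epsilon$, giving a pairwise equidistant set of size $M$.

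Applying Theorem~\ref{thm:equidistances} to each of these sets, an isometric embedding of $C$ into $\ell_2^2$ (and hence of any such subset) must use at least $N-1$ dimensions on account of $P$, and at least $M-1$ dimensions on account of $B$. Combining these two bounds yields the required lower bound of $\max\{N,M\}-1$.

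There is no real obstacle here; the only subtlety is to be careful not to include both $x_n$ and $\bar x_n$ in the equidistant set (which would introduce a distance of $1$ and break equidistance), and not to mix literals with clauses they contain or whose negations they contain (which would introduce the distances $1+\epsilon(3/2 + w(b_m)/(2W))$ or $1+\epsilon(3/2 + c\cdot w(b_m)/(2W))$). Restricting to $P$ or to $B$ separately avoids both issues, and no cleverer construction is needed since we only aim for a bound of $\max\{N,M\}-1$ rather than $N+M-1$.
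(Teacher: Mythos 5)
Your proposal is correct and matches the paper's own argument exactly: the paper likewise observes that $B$ and $\{x_n\}_{\OneTo{n}{N}}$ are pairwise equidistant sets (of sizes $M$ and $N$) and applies Theorem~\ref{thm:equidistances}. Your additional verification that these sets indeed fall into the ``else'' case of the distance definition is a welcome, if routine, elaboration.
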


\begin{proof}
	By definition of our distances, the sets $B$ and $\{x_n\}_{\OneTo{n}{N}}$ form sets of pairwise equidistant points.
	Thus, we obtain the claim by applying Theorem~\ref{thm:equidistances}.
\end{proof}
\section{Open Problems}

By proving tight PLS-completeness for unweighted MUFL and DKM we solved one of the open problems posed by \cite{brauer17}.
We furthermore answered the question if it was possible to asymptotically reduce the dimensionality of the DKM reduction by presenting a lower bound on the number of dimensions required to embed the point set constructed by our reduction.
Since we asymptotically match the number of dimensions used when actually embedding the point set, this shows that it is not possible the obtain a better result using the reduction presented here.
Hence, to prove PLS-completeness of DKM in low dimensional space, we need to come up with a structurally new idea.

\bibliographystyle{alpha}
\bibliography{ref}

\end{document}